\newtheorem{assumption}{Assumption}
\newtheorem{theorem}{Theorem}
\newtheorem{remark}{Remark}
\newtheorem{problem}{Problem}
\newtheorem{lemma}{Lemma}
\newtheorem{definition}{Definition}
\begin{document}
\title{Adaptive Dynamic Programming and Data-Driven Cooperative Optimal Output Regulation with Adaptive Observers}


\author{Omar Qasem,~\IEEEmembership{Graduate~Student~Member,~IEEE}, Khalid Jebari and Weinan Gao,~\IEEEmembership{Senior Member,~IEEE}
\thanks{O. Qasem and W. Gao are with the Department of Mechanical and Civil Engineering,
        College of Engineering and Science,
        Florida Institute of Technology, Melbourne, FL 32901, USA
        {\tt\small oqasem2021@my.fit.edu, wgao@fit.edu }}%
\thanks{K. Jebari is with the Department of Aerospace Engineering,
        College of Engineering and Science,
        Florida Institute of Technology, Melbourne, FL 32901, USA {\tt\small kjebari2021@my.fit.edu}}
\thanks{This work was supported in part by the U.S. National Science Foundation under Grant CMMI-2138206. 
The corresponding author is W. Gao.}
}

\maketitle

\begin{abstract}
In this paper, a novel adaptive optimal control strategy is proposed to achieve the cooperative optimal output regulation of continuous-time linear multi-agent systems based on adaptive dynamic programming (ADP). 
The proposed method is different from those in the existing literature of ADP and cooperative output regulation in the sense that the knowledge of the exosystem dynamics is not required in the design of the exostate observers for those agents with no direct access to the exosystem. 
Moreover, an optimal control policy is obtained without the prior knowledge of the modeling information of any agent while achieving the cooperative output regulation. 
Instead, we use the state/input information along the trajectories of the underlying dynamical systems and the estimated exostates to learn the optimal control policy.
Simulation results show the efficacy of the proposed algorithm, where both estimation errors of exosystem matrix and exostates, and the tracking errors converge to zero in an optimal sense, which solves the cooperative optimal output regulation problem.   
\end{abstract}
\small\textbf{\textit{Index Terms}---Optimal control, reinforcement learning, adaptive dynamic programming, cooperative optimal control, cooperative optimal output regulation.}

\section{Introduction}
 In the past bidecade, cooperative control of multi-agent systems (MASs) has gained numerous attentions due to its importance in real-world applications. The cooperative output regulation problem (CORP) is mainly concerned in designing distributed controllers 
 to achieve asymptotic tracking of a class of reference inputs, in addition to rejecting the disturbances in leader-follower MASs. The problem is usually formulated as leader tracking and disturbance rejection, wherein the subsystems (followers) are split into two groups. The first group of followers have direct access to the signal of the exosystem (leader), and the second group consists of those who do not have a direct access to it. The designed controller ensures the stability of the closed-loop system of the whole MASs. 
 \subsection*{Related Works}
Due to its massive impact and effectiveness in engineering applications, the CORP has been widely investigated for both continuous-time linear systems \cite{7403070,cai2017adaptive,deng2020distributed,deng2019distributed,lu2017leader,Huang2012,Su2012Cyber}, and discrete-time linear systems \cite{Yan2016,Liu2018,yan2017cooperative,LIU2019587}. Such applications include connected and autonomous vehicles, cooperative robot reconnaissance, and satellite clustering \cite{ploeg2014graceful, liu2017cooperative, Lu2017TAC,Deng2020}. In addition, 
the output regulation 
has been 
considered for nonlinear systems, see \cite{byrnes1997structurally,Huang2004,Khalil19941587,Serrani2001TAC,xu2017constructive} and references therein. 
 
Generally speaking,  the CORPs are solved using either the feedback-feedforward control strategy, or the internal model principle. Moreover, the cooperative control problem is mainly solved in a distributed way, such that when not all systems in the network have a direct access to the exosystem, those systems reconstruct the exosystem signals through their communication channels with their direct neighbors \cite{Su2013}. For instance, a distributed observer is designed in \cite{Huang2012} to estimate the exogenous signals for the agents with no direct access to exosystems. 
 
 Besides the issues of accessibility and maintaining the asymptotic tracking, having the full knowledge of the dynamics of each agent is a difficult task or impossible practically.
 Moreover, the complexity of modeling a dynamical system increases dramatically as the number of agents and their states increase. 
 In order to address this challenge, the authors in \cite{Baldi2020} have developed an indirect adaptive control approach to solve the CORP with unknown system dynamics.
 However, the designed control policy may be far from being optimal, whether in its transient or steady state since the main objective of 
 to 
 achieve the closed-loop stability with rejecting disturbances. 
These gaps have been filled up in \cite{Gao2017ACC} and \cite{gao2018leader} where a solution to the cooperative optimal output regulation problem (COORP) was proposed, such that a data-driven optimal controller is designed to approximate the feedback and feedforward control gains without the knowledge of MASs' dynamics using the online state/input information collected along the trajectories of each subsystem. Using reinforcement learning (RL) and Bellman's principle of optimality \cite{sutton2018reinforcement}, adaptive dynamic programming (ADP) methods
\cite{gao2021reinforcement,he2019adaptive,JiangBook2017,kamalapurkar2018reinforcement,vamvoudakis2012multi,wei2020continuous,yang2021model,Bian2016,bian2021reinforcement,heydari2018stability,
rizvi2019reinforcement,
zhao2020event,QasemHI,GAO2022110366}
 are developed such that each agent can learn towards the optimal control policy by interacting with its unknown environment. With this learning framework, one can develop an adaptive optimal controller which behaves optimally on a long term without the knowledge of the system matrices. Differential game theory has also been considered with output regulation problems in \cite{odekunle2020reinforcement}. It studies how systems interact with each other and  considers them as players in a game, and provides them with utility functions and learning rules to achieve a collective goal \cite{marden2015game}. 
\subsection*{Main Contributions}

In this work, we propose an innovative adaptive optimal control design algorithm to obtain an approximated optimal feedback-feedforward controller by means of ADP in parallel with the exosystem estimator \cite{Baldi2020} so that each agent can achieve asymptotic tracking while rejecting disturbances without previous knowledge of the agents' and the exosystem's dynamics. First, our proposed algorithm is different from those presented in \cite{liu2017cooperative,Lu2017TAC,Deng2020,Baldi2020}, in the sense that the designed feedback and feedforward gains in our work are optimal and are achieved adaptively. 
Second, comparing to our previous work \cite{Gao2017ACC,gao2018leader}, the prior knowledge of the exosystem dynamics and the frequencies of the exostates is not required. Therefore, this work is the first of its kind to solve the COORPs with adaptive observer using the feedback-feedforward strategy, wherein the designed feedback-feedforward gains are optimal. Third, neither the knowledge of the agents' dynamics nor that of the exosystem is required in the proposed approach with guaranteed convergence analysis of the proposed algorithm provided. Last but not least, the cooperative output regulation is achieved with rigorously stability analysis such that the tracking error along with the exostate estimation errors converge to zero asymptotically. 

\subsection*{Structure}
The rest of this paper is organized as follows. Section \ref{sec: Problem Statement} formulates the problem and covers the preliminaries. The main results of the ADP approach along with the distributed exosystem estimator are presented in Section \ref{sec: COORP}. In Section \ref{sec: Simulation}, simulation results are given to illustrate the efficacy of the proposed algorithm. Last but not least, the conclusion is drawn in Section \ref{sec: Conclusion}.

\subsection*{Notations}
The operator $|\cdot|$ represents the Euclidean norm for vectors and the induced norm for matrices. $\mathbb{Z}_+$ denotes the set of nonnegative integers. The Kronecker product is represented by $\otimes$, and the block diagonal matrix operator is denoted by $\textrm{bdiag}$. $I_n$ denotes the identity matrix of dimension $n$ and $0_{n\times m}$ denotes a $n\times m$ zero matrix. $\text{vec}(A) = [a_1^\textrm{T},a_2^\textrm{T},...,a_m^\textrm{T}]^\textrm{T}$, where $a_i \in \mathbb{R}^n$ is the $i^{\text{th}}$ column of $A \in \mathbb{R}^{n\times m}$. For a symmetric matrix $P=P^\textrm{T} \in \mathbb{R}^{m\times m},$ $\text{vecs}(P)=\left[p_{11},2p_{12},...,2p_{1m},p_{22},2p_{23},...,2p_{m-1,m},p_{mm}\right]^\textrm{T}\in \mathbb{R}^{\frac{1}{2}m(m+1)}$. $P\succ(\succeq)0$ and $P\prec(\preceq)0$ denote the matrix $P$ is positive definite (semidefinite) and negative definite (semidefinite), respectively. For a column vector $v\in \mathbb{R}^m$, $|v|_P = v^\textrm{T}Pv$ , and $\text{vecv}(v)=\textrm{vecs}(v^\textrm{T}v)$. For a matrix $A\in \mathbb{R}^{n\times n}$, $\sigma(A)$ denotes the spectrum of $A$. For any $\lambda \in \sigma(A)$, $\text{Re}(\lambda)$ represents the real part of the eigenvalue $\lambda$.
\section{Problem statement and preliminaries}\label{sec: Problem Statement}
In this section, the problem to be studied is presented with the preliminaries and assumptions considered throughout this paper. To begin with, a class of continuous-time linear MASs are described as follows.
\begin{align}
    \dot v&=Ev,\label{exosys}\\
    \dot{x}_i&=A_ix_i+B_iu_i+D_iv,\label{subsys1}\\
    e_{i}&=C_ix_i+F_iv,~i\in\mathcal{F},\label{subsys2}
\end{align} 
 where $x_i\in\mathbb{R}^{n_i}$ is the state,
 $u_i\in\mathbb{R}^{m_i}$ is the control input, $e_{i}\in\mathbb{R}^{p_i}$ is the tracking error for the $i^{\textrm{th}}$ subsystem, and $v \in \mathbb{R}^{q}$ is the state of the exosystem in \eqref{exosys}. 
The set $\mathcal{F}$ is defined by $\mathcal{F}=\{1,2,\ldots,N\}$. The matrices $A_i\in \mathbb{R}^{n_i\times n_i}$, $B_i\in \mathbb{R}^{n_i{\times m_i}}$, $D_i\in \mathbb{R}^{n_i\times q}$, $C_i\in \mathbb{R}^{p_i\times n_i}$ and ${F_i\in \mathbb{R}^{p_i\times q}}$. The multi-dimensional harmonic oscillator matrix is in the form of
\begin{align}\label{eq: E definition}
    E ={\textrm{bdiag}}\begin{bmatrix}
0 & w_r \\
-w_r & 0 
\end{bmatrix}_{q/2}\in\mathbb{R}^{q\times q},
\end{align}
where $w_r > 0$, $r = 1,...,q/2$ are unknown distinct frequencies of the exosystem. The matrices $A_i$, $B_i$, $D_i$, and $E$ are assumed to be unknown for all $i\in\mathcal{F}$, with the exosystem defined in \eqref{exosys} being marginally stable due to $w_r$ being nonzero and distinct.

The digraph $\mathcal{G} = \{\mathcal{V},\mathcal{E}\}$, where $\mathcal{V}= \{0,1,…,N\}$ is the set of nodes with $0$ denoting the leader modeled by \eqref{exosys}, and $\mathcal{F}$ represents the set of followers modeled by \eqref{subsys1} and \eqref{subsys2}. $\mathcal{E}$ represents the edge set $\mathcal{E}\subset \mathcal{V} \times \mathcal{V}$ where an edge from node $i$ to node $j$ is denoted by $(i,j)$ and $\mathcal{N}_i$ denotes the subset of $\mathcal{V}$ which consists of all the neighbors of the $i^{\textrm{th}}$ node.
The adjacency matrix of the digraph $\mathcal{G}$ is denoted by $\mathcal{A}$, such that $\mathcal{A} =[ a_{ij} ]$ satisfies $a_{ii}=0$ and $a_{ij}>0$ when $(j,i) \in \mathcal{E}$.
The Laplacian matrix of digraph $\mathcal{G}$ is denoted by $\mathcal{L}=[l_{ij} ]$, where $l_{ii}=\sum_{j=1}^{N} a_{ij}$  and $l_{ij}=-a_{ij}$ if $i \neq j$. The target matrix $\mathcal{M}=[m_{ij}]$ represents the communication links between the leader and the followers, wherein $m_{ii}=1$ if $i\in \mathcal{T}$ and $m_{ii}=0$ otherwise, where $\mathcal{T}$ is a set of nodes with direct contact with the leader.

Our main goal is to design a data-driven distributed optimal control policy for the MASs described by \eqref{exosys}-\eqref{subsys2} to solve the COORP. Inspired by our previous work \cite{Gao2017ACC}, the proposed method is built upon an ADP approach. In particular, policy iteration (PI) is used, whose rate of convergence is quadratic \cite{Kleiman1968}. Different from \cite{Gao2017ACC}, we drop the assumption of the $a$ $priori$ knowledge of the exosystem matrix. 
Furthermore, the exostates are not accessible by all the followers. 
Some standard assumptions are taken into consideration while solving the CORP, which are listed as follows.
\begin{assumption}\label{assumption: controllability}
The pairs ($A_i$,$B_i$) and ($C_i$,$A_i$) are stabilizable and observable for all $i\in\mathcal{F}$.
\end{assumption}
\begin{assumption}\label{graphasm}
The leader interacts with at least one follower, i.e., $\mathcal{T}$ is nonempty, and the graph $\mathcal{G}$ is undirected and connected if eliminating the leader node and all edges connected to the leader.
\end{assumption}
\begin{assumption}\label{assumption: rank}
rank$\left(\begin{bmatrix}
A_i-\lambda I & B_i\\C_i&0
\end{bmatrix}\right) = n_i +p_i,\,\forall\lambda \in \sigma (E)$, and $i\in\mathcal{F}$.
\end{assumption}
\begin{remark} 
Unlike \cite{Baldi2020}, where the matrix $F$ has to be in the form $F=\left[{0\;1}\ldots\;0\;1\right]$, such that the block $\left[0\;1\right]$ is repeated $q/2$ times, 
the ADP-based approach in our work is more 
flexible and has no restrictions on the structure of the matrix $F$.
\end{remark}
\begin{remark}
Assumption \ref{graphasm} guarantees that at least one follower has access to the exogenous signals, and that all subsystems have access to at least one of their neighbors' exosystem estimation $\eta_j$, $\forall j \in \mathcal{N}_i$ so that $\epsilon_i \neq 0$ $\forall i \in \mathcal{V}$.
\end{remark}

If the exostate is accessible by all the followers, the CORP can be solved by a decentralized control policy in the form of  
\begin{align}
    u_i&=-K_ix_i+L_iv,~i\in\mathcal{F},\label{uctrl}
\end{align}
where $K_i$ and $L_{i}$ are the feedback and forward gain matrices, respectively.
\begin{definition}
For any $i\in\mathcal{F}$, a control feedback gain matrix $K_i \in \mathbb{R}^{m_i\times n_i}$ is called stabilizing for the $i^{\textrm{th}}$ subsystem if and only if $A_i-B_iK_i$ is Hurwitz.
\end{definition}

Note that the decentralized controller (\ref{uctrl}) is not applicable according to the Assumption \ref{graphasm}. If the exosystem matrix $E$ is known to all followers, then one may design a distributed observer \cite{Huang2012, Gao2017ACC} to estimate the exostate $v$.
However, if the exact knowledge of the exosystem matrix is not available, it is impossible to use the distributed observer proposed in \cite{Huang2012, Gao2017ACC} to estimate the exostates signals. This barrier has been removed by proposing a distributed adaptive estimator in \cite{Baldi2020}, in which the estimation does not require a prior knowledge of the exosystem matrix. 
We will take the advantage of this result to develop a new ADP algorithm to solve the COORP with adaptive observer.
\section{Solving COORP with Adaptive Observer}\label{sec: COORP}
In this section, an ADP-based approach with distributed adaptive observer is proposed with stability and convergence analysis provided. To begin with, we consider observing the unknown exosystem states. The local observation error for the agent $i$ is defined as follows.
\begin{align}
   \epsilon_i=\sum_{j=1}^{N}a_{ij}(\eta_i-\eta_j)+m_{ii}(\eta_i-v),\label{epsilon}
\end{align}
such that $\eta_i$ and $\epsilon_i $ are vectors in the form of
\begin{align}
    \eta_i&=\begin{bmatrix}
\eta_{i,1} & \ldots &\eta_{i,q} 
\end{bmatrix}^\textrm{T},\;\;
    \epsilon_i=\begin{bmatrix}
\epsilon_{i,1} & \ldots &\epsilon_{i,q} 
\end{bmatrix}^\textrm{T}.
\end{align}

By having $\epsilon_i \rightarrow 0\; \forall i \in \mathcal{V}$, it is then achievable that $\eta_i \rightarrow v, \forall i \in \mathcal{V}$. This enables us to reconstruct the exostate $v$ for non-target nodes that do not have access to it.
The distributed adaptive observer is as follows.
\begin{align}
    \dot \eta_i&=\hat{E_i}\eta_i+(\mathbb{A}_m-\hat{E}_i)\epsilon_i,\label{etadot}
\end{align}
with $ \mathbb{A}_m \in \mathbb{R}^{q \times q}$ a Hurwitz diagonal matrix defined as
\begin{align}
    \mathbb{A}_m&=-{ \textrm{bdiag}}(a_r.I_2)_{q/2},\;a_r>0,\;r=1,\ldots,q/2.\label{A_m}
\end{align}
The estimation of the exosystem matrix $E$ for the $i^{\text{th}}$ follower is
\begin{align}
    \hat{E_i}&=\textrm{bdiag}\begin{bmatrix}
    0 & (\hat{w}_r)_i \\
    -(\hat{w}_r)_i & 0 
   \end{bmatrix}_{q/2},\label{Ehat}
  \end{align}
where $\left(\hat{w}_r\right)_i$ is the estimate of the frequencies of the leader system for the $i^{\text{th}}$ subsytem described by the following dynamical equation. 
\begin{align}
   \left(\dot{\hat{w}}_r\right)_i&=\kappa_r\left(\eta_{i,(2r-1)}\epsilon_{i,(2r)}-\eta_{i,(2r)}\epsilon_{i,(2r-1)}\right),\label{What}
  \end{align}
with $\kappa_r>0$ being a constant design gain.

\begin{lemma}[\hspace{-0.2pt}\cite{Baldi2020}]\label{lemma: observer} Under Assumption \ref{graphasm}, by considering \eqref{etadot}-\eqref{What} the adaptation law \eqref{What} guarantee that  $\eta_i \rightarrow v$ and $\hat{E}_i \rightarrow E$ as $t \rightarrow \infty,\; \forall i \in \mathcal{V}$.  
\end{lemma}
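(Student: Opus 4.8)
The plan is to establish the two convergence claims separately: the exostate reconstruction $\eta_i \to v$ by a Lyapunov--Barbalat argument, and the frequency-parameter convergence $\hat E_i \to E$ by exploiting a persistence-of-excitation property inherited from the harmonic structure of the exosystem. First I would stack the local observers. Writing $\bar\eta = [\eta_1^\textrm{T},\ldots,\eta_N^\textrm{T}]^\textrm{T}$, $\tilde\eta = \bar\eta - (\mathbf 1_N \otimes v)$, and $\mathcal H = \mathcal L + \mathcal M$, the definition \eqref{epsilon} reads compactly as $\epsilon = (\mathcal H \otimes I_q)\tilde\eta$. Assumption \ref{graphasm} guarantees that $\mathcal H$ is symmetric positive definite, so $\epsilon \to 0$ and $\tilde\eta \to 0$ are equivalent, and the claim $\eta_i \to v$ is exactly $\tilde\eta \to 0$. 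Subtracting $\dot v = Ev$ from \eqref{etadot} and introducing $\tilde E_i = \hat E_i - E$ (block diagonal with $2\times2$ blocks built from $(\tilde w_r)_i = (\hat w_r)_i - w_r$), the error dynamics become $\dot{\tilde\eta} = (I_N\otimes E)\tilde\eta + \tilde{\mathbf E}\,\bar\eta + (I_N\otimes\mathbb A_m - \hat{\mathbf E})\epsilon$, where $\tilde{\mathbf E}$ and $\hat{\mathbf E}$ are the block-diagonal collections of $\tilde E_i$ and $\hat E_i$.

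The core of the argument is the Lyapunov candidate $V = \tfrac12\,\tilde\eta^\textrm{T}(\mathcal H\otimes I_q)\tilde\eta + \tfrac12\sum_{i}\sum_{r}\kappa_r^{-1}(\tilde w_r)_i^2$, which is positive definite since $\mathcal H \succ 0$. Differentiating and using $\tilde\eta^\textrm{T}(\mathcal H\otimes I_q) = \epsilon^\textrm{T}$, three simplifications are decisive. The key structural fact I would lean on is that $E$, $\hat E_i$, and hence $\tilde E_i$ are all skew-symmetric because of the harmonic-oscillator form \eqref{eq: E definition}; this forces $\tilde\eta^\textrm{T}(\mathcal H\otimes E)\tilde\eta = 0$ and $\epsilon^\textrm{T}\hat{\mathbf E}\epsilon = 0$, so the only sign-definite contribution from the first group of terms is $\epsilon^\textrm{T}(I_N\otimes\mathbb A_m)\epsilon \le 0$ (negative because $\mathbb A_m \prec 0$). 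The remaining cross term $\epsilon^\textrm{T}\tilde{\mathbf E}\,\bar\eta$ expands, component by component, into $\sum_{i,r}(\tilde w_r)_i(\eta_{i,2r}\epsilon_{i,2r-1}-\eta_{i,2r-1}\epsilon_{i,2r})$, which is exactly cancelled by the adaptation term $\sum_{i,r}\kappa_r^{-1}(\tilde w_r)_i(\dot{\hat w}_r)_i$ once \eqref{What} is substituted. This cancellation is precisely what the gain structure in \eqref{What} is designed to achieve, and it leaves $\dot V = \epsilon^\textrm{T}(I_N\otimes\mathbb A_m)\epsilon \le -c|\epsilon|^2$ for some $c>0$.

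From here the state convergence follows by a now-standard route. Since $\dot V \le 0$, $V$ is bounded and nonincreasing, so $\tilde\eta$ and every $(\tilde w_r)_i$ are bounded; since $E$ is marginally stable, $v$ and therefore $\eta_i$ and $\hat E_i$ are bounded as well. Integrating $\dot V \le -c|\epsilon|^2$ shows $\epsilon$ is square integrable, and boundedness of all signals makes $\dot\epsilon$ bounded, hence $\epsilon$ uniformly continuous; Barbalat's lemma then yields $\epsilon \to 0$, and inverting $\mathcal H \otimes I_q$ gives $\tilde\eta \to 0$, i.e.\ $\eta_i \to v$ for all $i \in \mathcal V$.

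The main obstacle is the parameter convergence $\hat E_i \to E$, equivalently $(\tilde w_r)_i \to 0$. This does not follow from $\dot V \le 0$ alone, because $V$ is only negative semidefinite in the joint state: $\dot V$ is flat along the parameter directions once $\epsilon = 0$. As in all adaptive estimation, convergence of the estimated parameters requires a persistence-of-excitation condition on the regressor. I would close this gap by noting that the exostate $v$ is a sum of $q/2$ sinusoids at the distinct nonzero frequencies $w_r$, so the regressor $[\eta_{i,2r-1},\eta_{i,2r}]$ driving \eqref{What} is persistently exciting once $\eta_i$ has converged to $v$; invoking the standard exponential-convergence result for adaptive systems with persistently exciting regressors, as established in \cite{Baldi2020}, then forces $(\tilde w_r)_i \to 0$, and hence $\hat E_i \to E$, for every follower. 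The delicate point to verify carefully is that the distinctness of the frequencies is exactly what rules out degenerate regressor directions and secures excitation channel by channel.
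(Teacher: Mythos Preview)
The paper does not supply a proof of this lemma at all: it is quoted verbatim from \cite{Baldi2020} and used as a black box, so there is no ``paper's own proof'' to compare against. Your proposal is therefore not reproducing the authors' argument but reconstructing (correctly) the argument of the cited reference.

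Substantively, your sketch is sound and is essentially the proof in \cite{Baldi2020}. The Lyapunov candidate $V=\tfrac12\tilde\eta^{\textrm T}(\mathcal H\otimes I_q)\tilde\eta+\tfrac12\sum_{i,r}\kappa_r^{-1}(\tilde w_r)_i^{2}$ is the right one; the three cancellations you identify (skew-symmetry of $\mathcal H\otimes E$ because $\mathcal H$ is symmetric under Assumption~\ref{graphasm}, skew-symmetry of $\hat{\mathbf E}$, and the designed cancellation between $\epsilon^{\textrm T}\tilde{\mathbf E}\bar\eta$ and the adaptation term) go through exactly as you describe, yielding $\dot V=\epsilon^{\textrm T}(I_N\otimes\mathbb A_m)\epsilon\le 0$. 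Barbalat then gives $\epsilon\to 0$ and, via invertibility of $\mathcal H$, $\tilde\eta\to 0$.

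One small refinement on the parameter-convergence step: the persistence of excitation you need is \emph{blockwise}, and since the $2\times 2$ blocks are decoupled, what guarantees PE of the $r$-th regressor $[\eta_{i,2r-1},\eta_{i,2r}]$ in the limit is simply that the corresponding exostate pair $[v_{2r-1},v_{2r}]$ is a nontrivial harmonic at frequency $w_r$, i.e.\ that its initial condition is nonzero. The distinctness of the $w_r$ is not the mechanism that secures PE channel by channel; it is what ensures the exosystem is marginally stable (no repeated eigenvalues on the imaginary axis with nontrivial Jordan blocks) and hence that $v$, $\eta_i$, and all signals remain bounded. Your conclusion is correct, but the justification you flag as ``delicate'' should be rephrased as a nondegeneracy condition on $v(0)$ rather than on the frequency set.
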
 

With the estimated exosystem matrix and exostate, we will introduce the ADP strategy to compute the optimal feedback-feedforward control policy to solve the COORP.

 The COORP studied considers both transient and steady state responses of each subsystem. The formulation follows the traditional linear optimal control output regulation problem \cite{Gao2017ACC} that the optimal distributed output regulation problem needs to solve the following two problems besides the CORP. 

  \begin{problem}\label{Problem 1}
  \begin{align}
      \min_{(X_i,U_i)}\textrm{ Tr}&(X_i^\textrm{T}\bar{Q}_iX_i+U_{i}^\textrm{T}\bar{R}_iU_{i}), \label{min_Tr}\\
   \text{subject to    }  X_iE&=A_iX_i+B_iU_i+D_i,\label{regeq1}\\
     0&=C_iX_i+F_i,\label{regeq2} 
  \end{align}
  where $\bar Q_i=\left(\bar Q_i\right)^\textrm{T}\succ0$ and $\bar R_i=\left(\bar R_i\right)^\textrm{T}\succ0$, for all 
  $i\in\mathcal{F}.$
   \end{problem}
   
  Based on Assumption \ref{assumption: rank}, the solvability of the regulator equations defined by \eqref{regeq1}-\eqref{regeq2} is guaranteed and the pairs $(X_{i},U_{i})$ exist for any matrices $D_{i}$ and $F_{i}$, $\forall$ $i\in\mathcal{F}$; see \cite{huangjiebook}. 
  Additionally, the solution to Problem \ref{Problem 1}, i.e., $(X_{i}^\star,U_{i}^\star)$ is unique.  
  \begin{problem}
  \begin{align}
      \min_{\bar{u}_i}  \int_{0}^{\infty} &\left(|e_{i}|_{Q_i}+|\bar{u}{_i}|_{R_i}\right) \,\textrm{d}t, \\\label{eq: error system 1}
      \textrm{subject to } \dot{\bar{x}}_{i}&=A_{i}\bar{x}_i+B_{i}\bar{u}_{i},\\\label{eq: error system 2}
      e_{i}&=C_{i}\bar{x}_{i},
  \end{align}
  where $Q_i=\left(Q_i\right)^\textrm{T} \succeq 0, R_i =\left(R_i\right)^\textrm{T} \succ0,$ with $\left(A_i,\sqrt{Q_i}C_i\right)$ being observable for all $i\in\mathcal{F}$. The equations \eqref{eq: error system 1}-\eqref{eq: error system 2} form the error system with $\bar{x}_{i}:=x_{i}-X_iv$ and $\bar{u}_{i}:=u_{i}-U_iv$.
  \end{problem}
  
Note that if the followers' dynamics in (\ref{subsys1}) are known, one can develop a distributed optimal controller in the following form.
\begin{align}\label{eq: ui*}
    u_i^\star(K_{i}^\star,L_{i}^\star)=-K_{i}^\star x_i+L_{i}^\star\eta_i,
\end{align} where $K_{i}^\star=R_{i}^{-1}B_{i}^\textrm{T}P_{i}^\star$, and $P_{i}^\star $ is the unique solution of the following albegraic Ricatti equation (ARE)
\begin{align}
    A_i^\textrm{T}P_i^\star+P_i^\star A_i+C_i^\textrm{T}Q_iC_i-P_i^\star B_iR_i^{-1}B_i^\textrm{T}P_i^\star=0.\label{eq: ARE}
\end{align}

The solutions to the regulator equations \eqref{regeq1}-\eqref{regeq2}, i.e., $(X_i,U_i)$, form the optimal feedforwad gain matrix such that \begin{align}\label{eq: Li*}
    L_{i}^\star&=U_{i}+K_{i}^\star X_{i},~\forall i\in\mathcal{F}.
\end{align}

It is remarkable that equation \eqref{eq: ARE} is nonlinear in $P_{i}^\star $. Therefore, in this paper we consider an iterative method to solve $P_{i}^\star $, i,e., ADP. In particular, we use PI since the rate of convergence of PI is quadratic, since it is a Newton-Raphson based method. In PI, the iterative process to find the optimal control policy is done by alternating two stages, i.e., policy evaluation, and policy improvement. The following lemma shows the convergence of \eqref{eq: ARE} in the sense of the PI method.
\begin{lemma}[\hspace{-0.2pt}\cite{Kleiman1968}]
Let $K_{i,0} \in \mathbb{R}^{m_i \times n_i}$ be a stabilizing feedback gain matrix $\forall~  i\in\mathcal{F}$, the matrix $P_{i,k}=(P_{i,k})^\textrm{T}\succ0$ be the solution of the following equation  
\begin{align}\label{eq: Policy evaluation}
   P_{i,k}(A_i-B_iK_{i,k-1})+(A_i-&B_iK_{i,k-1})^\textrm{T} P_{i,k} +C_i^{\textrm{T}}Q_iC_i\nonumber\\&+K_{i,k-1}^\textrm{T} R_i K_{i,k-1}=0,
\end{align} 
 and the control gain matrix $K_{i,k}$, with $k=1,2,\cdots,$ are defined recursively by 
\begin{align}\label{eq: Policy improvement}
    K_{i,k}=R_{i}^{-1} B_i^\textrm{T} P_{i,k-1}.
\end{align} 
Then the following properties hold for any $k\in \mathbb{Z}_{+}$, $ i\in\mathcal{F}$ 
\begin{enumerate}
    \item The matrix $A_i-B_iK_{i,k}$ is Hurwitz.
    \item $P_i^\star \preceq P_{i,k} \preceq P_{i,k-1}$.
    \item $\underset{{k\rightarrow \infty}}\lim K_{i,k} = K_i^\star ,\; \underset{{k\rightarrow \infty}}\lim P_{i,k}=P_i^\star $.
\end{enumerate}
\end{lemma}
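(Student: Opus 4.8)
The plan is to proceed by induction on the iteration index $k$, with the entire argument driven by a single completion-of-squares rewriting of the policy-evaluation equation \eqref{eq: Policy evaluation}. Throughout I would suppress the agent index $i$ and abbreviate $A_k:=A-BK_k$. The base case is immediate: because $K_0$ is stabilizing, $A-BK_0$ is Hurwitz, so \eqref{eq: Policy evaluation} is a bona fide Lyapunov equation with a unique symmetric solution $P_1$, and observability of $(A,\sqrt{Q}C)$ together with $R\succ0$ makes $P_1\succ0$.

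For the inductive step I would assume $A_{k-1}$ Hurwitz and $P_k\succ0$, insert the improvement relation \eqref{eq: Policy improvement} in the form $B^\textrm{T}P_k=RK_k$, and reorganize \eqref{eq: Policy evaluation} about the \emph{updated} closed-loop matrix $A_k$ to obtain
\[
A_k^\textrm{T}P_k+P_kA_k=-C^\textrm{T}QC-K_k^\textrm{T}RK_k-(K_{k-1}-K_k)^\textrm{T}R(K_{k-1}-K_k).
\]
Since the right-hand side is negative semidefinite and $P_k\succ0$, this is a Lyapunov certificate placing $\sigma(A_k)$ in the closed left half-plane; to upgrade it to strict stability I would test any purely imaginary eigenvalue with its eigenvector, use $R\succ0$ to annihilate the $K_k$- and increment-terms (forcing $K_k x=0$ and hence $A_k x=Ax$), and then invoke observability of $(A,\sqrt{Q}C)$ via the PBH test to reach a contradiction. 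This gives property~1 for $k$ and simultaneously makes \eqref{eq: Policy evaluation} well-posed at stage $k+1$, closing the induction.

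With every $A_k$ known to be Hurwitz, property~2 would follow by differencing Lyapunov equations. Subtracting the evaluation equation at stage $k+1$ from the rearranged identity above yields $A_k^\textrm{T}(P_k-P_{k+1})+(P_k-P_{k+1})A_k=-(K_{k-1}-K_k)^\textrm{T}R(K_{k-1}-K_k)\preceq0$, so Hurwitzness forces $P_{k+1}\preceq P_k$. For the lower bound I would derive from the ARE \eqref{eq: ARE} the identity $A_{k-1}^\textrm{T}P^\star+P^\star A_{k-1}+C^\textrm{T}QC+K_{k-1}^\textrm{T}RK_{k-1}=(K_{k-1}-K^\star)^\textrm{T}R(K_{k-1}-K^\star)$ and subtract \eqref{eq: Policy evaluation}, obtaining a Lyapunov equation for $P^\star-P_k$ with positive-semidefinite right-hand side, whence $P^\star\preceq P_k$. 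Restoring the index, this is the chain $P_i^\star\preceq P_{i,k}\preceq P_{i,k-1}$.

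Property~3 is then the standard monotone-convergence conclusion: the sequence $\{P_k\}$ is nonincreasing and bounded below by $P^\star\succ0$, hence converges to some $P_\infty\succeq P^\star$ with the increments $K_{k-1}-K_k\to0$; passing to the limit in \eqref{eq: Policy evaluation}--\eqref{eq: Policy improvement} shows $P_\infty$ solves \eqref{eq: ARE}, and uniqueness of the positive-definite stabilizing solution forces $P_\infty=P_i^\star$ and $K_{i,k}\to K_i^\star$. I expect the inductive Hurwitz claim to be the only genuine obstacle: the completion-of-squares rearrangement is exactly where the Newton--Kleinman structure is hidden, and converting the non-strict inequality $A_k^\textrm{T}P_k+P_kA_k\preceq0$ into strict stability is the delicate observability/detectability step; once each $A_k$ is Hurwitz, the monotonicity and convergence arguments are routine.
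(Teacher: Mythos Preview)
The paper does not prove this lemma at all: it is stated with the citation \cite{Kleiman1968} and used as a known result, so there is no ``paper's own proof'' to compare against. Your argument is the standard Newton--Kleinman proof and is correct in substance: the completion-of-squares identity
\[
A_k^{\mathrm T}P_k+P_kA_k=-C^{\mathrm T}QC-K_k^{\mathrm T}RK_k-(K_{k-1}-K_k)^{\mathrm T}R(K_{k-1}-K_k)
\]
is exactly the engine Kleinman uses, and your use of the PBH test with observability of $(A,\sqrt{Q}\,C)$ to exclude imaginary-axis eigenvalues, followed by Lyapunov differencing for monotonicity and monotone convergence to the unique stabilizing ARE solution, is the classical route.

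One housekeeping remark: the indexing in the lemma as written has $K_{i,k}=R_i^{-1}B_i^{\mathrm T}P_{i,k-1}$, whereas the identity you derive requires $B^{\mathrm T}P_k=RK_k$, i.e.\ $K_k=R^{-1}B^{\mathrm T}P_k$. This is only a one-step relabeling (and arguably the lemma's indexing is the unusual one), so it is not a mathematical gap, but you should state explicitly which convention you are adopting before writing the key identity; otherwise a reader checking line by line against \eqref{eq: Policy improvement} will see a mismatch.
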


Since the unavailability of the system and exosystem dynamics is considered in our approach, the states, inputs and exostates information collected along the trajectories of the underlying dynamical systems are used to learn the optimal feedback and feedforward gain matrices $K_{i}^\star $ and $L_{i}^\star $ for all $i\in\mathcal{F}$. By solving $U_{i}^\star $ and $X_{i}^\star $ from \eqref{regeq1}-\eqref{regeq2}, one is able to solve for $L_{i}^\star $ from \eqref{eq: Li*}.

Now we are ready to introduce the variables $\bar{x}_{ij}=x_i-X_{ij}v,$ $ j = 0,1,2,...,h_i+1$ with $h_i=(n_i+p_i)q$ being the dimension of the null space of $(I_q \otimes C_i)$, where the following is met: $X_{i0}=0_{n_i \times q},$ $X_{i1} \in \mathbb{R}^{n_i \times q}$ such that $C_iX_{i1}+F_i=0$, and $X_{ij} \in \mathbb{R}^{n_i \times q}$, $\forall j \in{2,3,...,h_i+1}$, such that all $\text{vec}(X_{ij})$ form a basis for $\text{ker}(I_q \otimes C_i)$, where $\text{ker}(\cdot)$ denotes the null space. 

The definition of $\bar{x}_{ij}$ enbales us to solve the Sylvester map of trail matrices $X_{ij}$ which is in itself crucial for solving the regulator equations to finally approximate $L^\star $ and $K^\star $ without previous knowledge of the systems' matrices ($A_i$, $B_i$, and $D_i$). By taking the time derivative along the trajectories of $\bar{x}_{ij}$ we have 
\begin{align}
   \dot{\bar{x}}_{ij}&=\dot{x}_i-X_{ij}\dot{v}\nonumber\\
   &=A_ix_i+B_iu_i+D_iv-X_{ij}Ev\nonumber\\
   &=A_{i,k}\bar{x}_{ij}+B_i(K_{i,k}\bar{x}_{ij}+u_i)+(D_i-S_i(X_{ij}))v,\label{xbarij_dot}
  \end{align}
  where $A_{i.k}=A_i-B_iK_{i,k}$ and $S_i(X)=XE-A_iX$ is a Sylvester map, $S_i: \mathbb{R}^{n_i \times q} \rightarrow \mathbb{R}^{n_i \times q}$. By taking the integration over the time inerval $[t,t+\delta t]$, $\delta t>0$, we obtain
  \begin{align}\label{eq: integration}
      &|\bar{x}_{ij}(t+\delta t)|_{P_{i,k}}-|\bar{x}_{ij}(t)|_{P_{i,k}}\nonumber\\&=\hspace{-3pt}\int_{t}^{t+\delta t}\hspace{-4.5pt}\left(|\bar{x}_{ij}|_{(A_{i,k}^\textrm{T}P_{i,k}+P_{i,k}A_{i,k})}+2(u_i+K_{i,k}\bar{x}_{ij})^\textrm{T}B_i^\textrm{T}P_{i,k}\bar{x}_{ij} \right.\nonumber\\&\left.+2v^\textrm{T}(D_i-\mathcal{S}_i(X_{ij}))^\textrm{T}P_{i,k}\bar{x}_{ij}\right )\textrm{d}\tau \nonumber\\&=\hspace{-3pt}\int_{t}^{t+\delta t}\hspace{-4.5pt}\left(-|\bar{x}_{ij}|_{(Q_i+K_{i,k}^\textrm{T}R_iK_{i,k})}+2(u_i+K_{i,k}\bar{x}_{ij})^\textrm{T}R_iK_{i,k+1}\bar{x}_{ij} \right.\nonumber\\&\left.+2v^\textrm{T}(D_i-\mathcal{S}_i(X_{ij}))^\textrm{T}P_{i,k}\bar{x}_{ij}\right)\textrm{d}\tau. 
  \end{align}
  Using Kronecker product properties, \eqref{eq: integration} can be written in a compact form as follows, wherein the approximated values of $P_{i,k}$ and $K_{i,k+1}$ can be solved in the sense of least square errors 

\begin{align}
   \Psi_{ijk} \left [\begin{array}{cc}
        \text{vecs}(P_{i,k})  \\
        \text{vec}(K_{i,k+1})\\
        \text{vec}((D_i-S_i(X_{ij})^\textrm{T}P_{i,k})
   \end{array} \right]=\Phi_{ijk},  \label{adp_eq}
  \end{align}
  where
 \begin{align}
 \Psi_{ijk}&=[\delta_{\bar{x}_{ij}\bar{x}_{ij}},-2\Gamma_{\bar{x}_{ij}\bar{x}_{ij}}(I_{ni}\otimes K_{i,k}^\textrm{T}R_i)-2\Gamma_{\bar{x}_{ij}u_i}(I_{ni} \otimes R_i),\nonumber\\&\;\;\;\;\;-2\Gamma_{\bar{x}_{ij}v}], \nonumber\\
\Phi_{ijk}&=-\Gamma_{\bar{x}_{ij}\bar{x}_{ij}}\text{vec}\left(Q_i+K_{i,k}^\textrm{T}R_iK_{i,k}\right),\nonumber\\
\delta_a&=\left[\text{vecv}(a(t_1))-\text{vecv}(a(t_0)),\ldots,\right.\nonumber\\&\;\;\;\;\;\;\;\;\;\;\;\;\;\;\;\;\;\;\;\;\;\;\;\;\;\;\;\;\;\;\;\;\;\;     \;\left.\text{vecv}(a(t_s))-\text{vecv}(a(t_{s-1}))\right]^\textrm{T},\nonumber\\
\Gamma_{a,b}&=\left[\int_{t_0}^{t_1} a\otimes b  \,\textrm{d}\tau,\int_{t_1}^{t_2} a\otimes b  \,\textrm{d}\tau,\ldots,\int_{t_{s-1}}^{t_{s}} a\otimes b  \,\textrm{d}\tau\right]^\textrm{T}.\nonumber
 \end{align}
 
The time sequence $\{t_l\}_{l=0}^{s}$ is a strictly increasing sequence. The uniqueness of the solution to equation \eqref{adp_eq} is guaranteed when the following rank condition is met:
  {\begin{align}
  \hspace{-2.94mm} \textrm{rank}\left(\left[\Gamma_{\bar{x}_{ij}\bar{x}_{ij}},\Gamma_{\bar{x}_{ij}u_i},\Gamma_{\bar{x}_{ij}v}\right]\right)=\frac{n_i(n_i+1)}{2}+(m_i+q_i)n_i.\hspace{-1.5mm}\label{rank_cndt}
  \end{align}}
  \begin{remark}
  In order to satisfy the condition in \eqref{rank_cndt}, exploration noise is added to the applied input during the learning phase. The exploration noise is usually random noise, random sinusoidal signals, or summation of sinusoidal signals with different frequencies. 
  \end{remark}

The general solution to the regulator equations \eqref{regeq1}-\eqref{regeq2} is obtained from the following.
\begin{align}
    X_i&=X_{i1}+\sum_{j=2}^{h_i+1} \alpha_{ij}X_{ij},\;\;\;\;\;\;\alpha_{ij} \in \mathbb{R}, \nonumber\\
    S_i(X_i)&=S_i(X_{i1})+\sum_{j=2}^{h_i+1}\alpha_{ij}S_i(X_{ij})=B_iU_i+D_i.\label{solutiosreg}
\end{align}
In matrix form, the equation in \eqref{solutiosreg} can be written as
\begin{align}
    \mathcal{A}_i\mathcal{X}_i=b_i,\label{solutionregequ}
\end{align}
where
\begin{align}
\mathcal{A}_i&=\begin{bmatrix}\mathcal{A}_{i1}&\mathcal{A}_{i2} \end{bmatrix},\nonumber\\
    \mathcal{A}_{i1}&=\begin{bmatrix} \text{vec}(S_i(X_{i2}))& \ldots & \text{vec}(S_i(X_{i,h+1}))\\\text{vec}(X_{i2})&\ldots &\text{vec}(X_{i,h+1})\end{bmatrix},\nonumber\\
    \mathcal{A}_{i2}&=\begin{bmatrix}0&-I_q\otimes (P_{ik}^{-1}K_{i,k+1}R_i)\\-I_{n_iq}&0 \end{bmatrix},\nonumber\\
    \mathcal{X}_i&=\begin{bmatrix} \alpha_{i2},\ldots,\alpha_{i,h+1},\text{vec}(X_i)^\textrm{T},\text{vec}(U_i)^\textrm{T} \end{bmatrix}^\textrm{T},\nonumber\\
    b_i&=\begin{bmatrix}\text{vec}(-S_i(X_{i1})+D_i)\\-\text{vec}(X_{i1})  \end{bmatrix}.\nonumber
\end{align}

In Theorem \ref{stability theorem}, we show that although the estimations of the exostates are used instead of their actual values, the cooperative output regulation can be achievable. Furthermore, the ADP algorithm to solve the COORP with the adaptive observer is shown in Algorithm \ref{algorithm: ADP} with proof of convergence shown in Theorem \ref{theorem}.
\begin{theorem}\label{stability theorem}
Given Assumptions \ref{assumption: controllability}-\ref{assumption: rank}, and under the system described by \eqref{exosys}-\eqref{What}, if ${\mathbf{\bar A}}_i=A_{i}-B_{i}K_{i}$ is Hurwitz ${\forall i\in\mathcal{F}}$, then the closed-loop state-feedback controller $u_{i}=u_{i}^{\star}(K_{i},\hat L_{i})$ achieves the cooperative output regulation,
where $\hat{L}_i=K_i\hat X_{i}+\hat{U}_{i}, $ and the pairs $(\hat{X}_i,\hat{U}_i)$ are the solutions of the following regulator equations. \begin{align}
    \hat{X}_{i}\hat{E}_{i}&=A_i\hat{X}_i+B_i \hat{U}_{i}+D_{i},\nonumber\\
    0&=C_{i}\hat X_{i}+F_i,\forall{i\in\mathcal{F}}.\nonumber
\end{align} 
\end{theorem}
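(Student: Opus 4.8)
The plan is to work in the ideal error coordinate and to treat the use of the estimated matrix $\hat E_i$ and estimated exostate $\eta_i$ as a vanishing perturbation acting on an exponentially stable nominal system. First I would fix, for each $i\in\mathcal{F}$, a genuine solution $(X_i,U_i)$ of the exact regulator equations \eqref{regeq1}--\eqref{regeq2} associated with the true matrix $E$, set $L_i:=U_i+K_iX_i$, and introduce $\bar x_i:=x_i-X_iv$. Differentiating along \eqref{subsys1} with the proposed input $u_i=-K_ix_i+\hat L_i\eta_i$, substituting the identity $D_i-X_iE=-A_iX_i-B_iU_i$ obtained from \eqref{regeq1}, and writing $x_i=\bar x_i+X_iv$, a short computation that mirrors the ideal (exact-model) case should collapse the $v$-dependent terms through $L_i=U_i+K_iX_i$ and yield
\begin{align}
\dot{\bar x}_i=\bar A_i\bar x_i+B_i\bigl(\hat L_i\eta_i-L_iv\bigr),\nonumber
\end{align}
where $\bar A_i=A_i-B_iK_i$ is Hurwitz by hypothesis. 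Thus the closed loop is the nominal stable system $\dot{\bar x}_i=\bar A_i\bar x_i$ driven by the forcing term $d_i(t):=B_i(\hat L_i\eta_i-L_iv)$.

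The second step is to prove $d_i(t)\to0$. I would split $\hat L_i\eta_i-L_iv=\hat L_i(\eta_i-v)+(\hat L_i-L_i)v$. The factor $\eta_i-v\to0$ by Lemma \ref{lemma: observer}, and $v$ is bounded because $\sigma(E)$ lies on the imaginary axis; it therefore suffices to show that $\hat L_i$ is bounded and that $\hat L_i\to L_i$. Since $\hat L_i=K_i\hat X_i+\hat U_i$, this reduces to establishing $(\hat X_i,\hat U_i)\to(X_i,U_i)$. Here I would invoke the continuous dependence of the regulator-equation solution on the exosystem matrix: by Lemma \ref{lemma: observer}, $\hat E_i\to E$, and because the non-resonance/rank condition of Assumption \ref{assumption: rank} is an open condition it persists for $\hat E_i$ once $t$ is large, so the perturbed equations are solvable and their solution varies continuously with $\hat E_i$, giving $\hat L_i\to L_i$.

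Combining these yields $d_i(t)\to0$, and the third step is the standard fact that a Hurwitz linear system forced by a vanishing input has a vanishing state. Writing $\bar x_i(t)=e^{\bar A_i t}\bar x_i(0)+\int_0^t e^{\bar A_i(t-\tau)}d_i(\tau)\,\textrm{d}\tau$, the first term decays exponentially and the convolution is controlled by splitting the integral at a large time beyond which $|d_i|$ is small, using the bound $|e^{\bar A_i t}|\le c\,e^{-\lambda t}$ for some $c,\lambda>0$. Hence $\bar x_i\to0$, and since \eqref{regeq2} gives $C_iX_i+F_i=0$, the tracking error satisfies $e_i=C_i\bar x_i+(C_iX_i+F_i)v=C_i\bar x_i\to0$, which is precisely cooperative output regulation for every follower.

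The main obstacle I anticipate is the convergence $\hat L_i\to L_i$, i.e.\ the continuous dependence of the regulator-equation solution on the estimate $\hat E_i$. Two points need care: (i) that $(\hat X_i,\hat U_i)$ remains well defined and that the equations are solvable for finite $t$, which I would secure by noting that Assumption \ref{assumption: rank} holds for all $\hat E_i$ sufficiently close to $E$; and (ii) that the particular solution selected is consistent in the limit despite the possible non-uniqueness of the raw regulator equations, which I would address by appealing to the unique solvability of the linear system \eqref{solutionregequ} (full rank of $\mathcal{A}_i$), so that the map $\hat E_i\mapsto(\hat X_i,\hat U_i)$ is single valued and continuous in a neighborhood of $E$. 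Once these are in place, the remaining estimates are routine.
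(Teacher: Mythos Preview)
Your argument is correct, but it uses a different error coordinate than the paper and therefore requires an extra ingredient that the paper avoids. The paper works in the \emph{estimated} coordinate $\tilde x_i:=x_i-\hat X_i\eta_i$, differentiates using the perturbed regulator identity $\hat X_i\hat E_i=A_i\hat X_i+B_i\hat U_i+D_i$ and the observer dynamics \eqref{etadot}, and obtains a Hurwitz system $\dot{\tilde x}_i=\bar{\mathbf A}_i\tilde x_i+(\cdot)\tilde\eta_i+(\cdot)\epsilon_i$. It then applies an ISS bound and Lemma~\ref{lemma: observer} directly, since the forcing is built only from $\tilde\eta_i=v-\eta_i$ and $\epsilon_i$, both of which vanish. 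No statement like $\hat L_i\to L_i$ is needed; only boundedness of $\hat X_i,\hat L_i$ is used implicitly.

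Your route, by contrast, works in the \emph{true} coordinate $\bar x_i=x_i-X_iv$ and reduces the question to $d_i(t)=B_i(\hat L_i\eta_i-L_iv)\to0$. That forces you to prove $\hat L_i\to L_i$, i.e.\ continuous dependence of the regulator solution on $\hat E_i$, which you correctly flag as the delicate point and which, under Assumption~\ref{assumption: rank} and a well-defined (e.g.\ Problem~\ref{Problem 1}) selection, goes through. What your choice buys is that $X_i$ is genuinely constant, so there is no $\dot{\hat X}_i\eta_i$ term to justify; what the paper's choice buys is that the vanishing of the forcing follows immediately from Lemma~\ref{lemma: observer} without any perturbation analysis of the regulator equations. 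Both reach $e_i\to0$, yours via $e_i=C_i\bar x_i$, the paper via $e_i=C_i\tilde x_i+F_i\tilde\eta_i$.
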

\begin{proof}
Let $\tilde{x}_i=x_{i}-\hat{X}_i\eta_i$, $\tilde{u}_i=u_{i}-\hat{U}_i\eta_i$ and $\tilde{\eta}_i=v-\eta_i$. By taking the time derivative of $\tilde{x}_i$ the following equation is obtained. \begin{align}
    \dot{\tilde{x}}_i=& \dot x_i-\hat{X}_i\dot\eta_i\nonumber\\
    =& A_ix_i+B_iu_i+D_{i}v_{i}-\hat{X}_{i}\left(\hat{E}_{i}\eta_i+\left(\mathbb{A}_m -\hat{E}\right)\epsilon_i\right)\nonumber\\
    =& {\bar{\mathbf{A}}}_i\tilde{x}_i+B_i(\tilde{u}_i+K_i\tilde{x}_i)+D_i\tilde{\eta}_i-\hat{X}_i(\mathbb{A}_m-\hat{E}_i)\epsilon_i\nonumber\\\label{eq: tilde x}
    =& {\bar{\mathbf{A}}}_i\tilde{x}_i+B_i(\tilde{u}_i+K_i\tilde{x}_i-\hat{L}_i\tilde{\eta}_i)+(D_i+B_i\hat{L}_i)\tilde{\eta}_i\nonumber\\
    &-\hat{X}_i(\mathbb{A}_{m}-\hat{E}_i)\epsilon_{i}.
\end{align}
By \eqref{eq: ui*}, we have $\tilde{u}_i=-K_i\tilde{x}_i+\hat{L}_i\tilde\eta_i$. In addition, since $v$ is bounded, so is $\tilde{\eta}_i\to 0$. Given $\bar{\mathbf{A}}_i$ is Hurwitz, we have \eqref{eq: tilde x} is input to state stable with $-\hat{X}_i(\mathbb{A}_m-\hat{E}_i)\epsilon_i$ and $\tilde\eta_i$ as inputs. In other words, there exist a function $\beta$ of class $\mathcal{K}\mathcal{L}$ and a function $\gamma$ of class $\mathcal{K}$ such that \begin{align}
    |\tilde{x}_i(t)|\leq \beta(|\tilde{x}_i(0)|,t)+\gamma\left(\underset{0\leq \tau \leq t}\sup\{|\tilde\eta_i(\tau)|, |\mathcal{Q}_i(\tau)|\}\right),
\end{align}
where $\mathcal{Q}_i=\hat{X}_i(\mathbb{A}_{m}-\hat{E}_i)\epsilon_{i}$.
Therefore, with the existence of the estimate of $v$, $\tilde{x}$ remains bounded. In addition, on the basis of \cite{Baldi2020}, we have ${\underset{t\to \infty}\lim}\epsilon_i = 0$ and ${\underset{t\to \infty}\lim}\tilde\eta_i = 0$  ${\forall i\in\mathcal{F}}$. Hence, it is concluded that $
\underset{t\to\infty}\lim\tilde{x}_{i}=0$ and $\underset{t\to\infty}\lim\tilde{e}_{i}=0$. The proof is thus completed. 
\end{proof}


\begin{algorithm}[tb!]
\begin{algorithmic}[1]
\small
\State $i \leftarrow 1$
 \Repeat
\State Compute the estimation $\hat{E}_i$ from \eqref{epsilon}-\eqref{What}. ${E_{i}\gets \hat{E}_i}$.
\State Choose $K_{i,0}$ such that $A_i-B_iK_{i,0}$ is a Hurwitz matrix,

\hspace{-0.2cm}and a small constant $\varepsilon_i > 0$.
\State Apply an essentially bounded control input 

\hspace{-0.25cm}${u_{i,0}=-K_{i,0}x_i+\zeta_i}$ where $\zeta_i$ is the exploration noise, 

\hspace{-0.25cm}over a time interval $[t_0,t_s]$. {$j \leftarrow 0$.}
\Repeat
\State Compute $\Gamma_{\bar{x}_{ij}\bar{x}_{ij}}, \Gamma_{\bar{x}_{ij}u_i}, \Gamma_{\bar{x}_{ij}v},  \delta_{\bar{x}_{ij}\bar{x}_{ij}}$ such that \eqref{rank_cndt} \text{ \text{  \text{ \text{ \text{}     }        }      }            } holds. {$j \leftarrow j+1$.}
\Until $j=h_i+2$
\State $j \leftarrow 1$, $k \leftarrow 0$
\Repeat
\State Solve $P_{i,k}$ and $K_{i,k+1}$ from \eqref{adp_eq}. {$k \leftarrow k+1$.}
\Until $|P_{i,k}-P_{i,k-1}|<\varepsilon_i$.
\State $k  \gets k^\star,j \gets 1$
\Repeat
\State Solve ${S}_i(X_{ij})$ from \eqref{adp_eq}. {$j \gets j+1$.}
\Until $j=h_i+2$
\State Solve $(X_i^\star ,U_i^\star )$ from Problem 1. 
 \State $L_{i,k^\star} \gets U_i^{\star} +K_{i,k^\star} X_i^\star $
\State Compute the approximated optimal control policy \text{ \text{             \text{        \text{\text{                          }}}}}${u_{i,k^\star} =u^\star (K_{i,{k}^\star} ,L_{i,{k}^\star} )}$ from \eqref{eq: ui*}. {$i\gets i+1$.}
\Until $i=N+1$
\end{algorithmic}
\caption{Data-driven COORP with Adaptive Observer}
\label{algorithm: ADP}
\end{algorithm}
\begin{theorem}\label{theorem}
If the rank condition in \eqref{rank_cndt} is satisfied, then for any small constant $c>0$ there exist constants $\kappa_r>0$, $r=1,2,\ldots,q/2$, and $k^\star \in \mathbb{Z}_+$ such that the sequences $\{P_{i,k}\}_{k=0}^{\infty}$ and $\{K_{i,k}\}_{k=1}^{\infty}$ learned from Algorithm \ref{algorithm: ADP} satisfy the inequalities
$|P_{i}^{(k^\star )}-P_{i}^\star |<c$ and $|K_{i}^{(k^\star )}-K_{i}^\star |<c$, respectively.
\end{theorem}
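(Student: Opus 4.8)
The plan is to decouple the argument into an \emph{ideal} policy-iteration (PI) part and a \emph{perturbation} part, and to combine them by the triangle inequality with a carefully ordered choice of quantifiers: first fix the iteration index $k^\star$ from the ideal PI, then tune the adaptation gains $\kappa_r$. To set this up, I would first note that \eqref{adp_eq} was obtained in \eqref{eq: integration} as an exact algebraic re-encoding of the policy-evaluation step \eqref{eq: Policy evaluation} together with the policy-improvement step \eqref{eq: Policy improvement}, with the true exostate $v$ entering the data through $\bar{x}_{ij}=x_i-X_{ij}v$ and through the $\Gamma_{\bar{x}_{ij}v}$ block. Hence, if the signals were generated with the \emph{true} $v$, then under the rank condition \eqref{rank_cndt} — which makes $\Psi_{ijk}$ of full column rank — the unique least-squares solution of \eqref{adp_eq} would coincide exactly with the PI iterates. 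Denote those ideal iterates by $\{\bar{P}_{i,k},\bar{K}_{i,k}\}$, and the quantities actually produced by Algorithm \ref{algorithm: ADP}, which substitute the estimate $\eta_i$ for $v$, by $\{P_i^{(k)},K_i^{(k)}\}$. By the PI result of \cite{Kleiman1968}, one has $\bar{P}_{i,k}\to P_i^\star$ and $\bar{K}_{i,k}\to K_i^\star$ for every $i\in\mathcal{F}$, so given $c>0$ I would choose $k^\star$, depending only on $c$ and the system data and \emph{not} on the observer, such that $|\bar{P}_{i,k^\star}-P_i^\star|<c/2$ and $|\bar{K}_{i,k^\star}-K_i^\star|<c/2$ for all $i$.

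Next I would quantify how far the learned iterates sit from the ideal ones. By Lemma \ref{lemma: observer}, the adaptive observer \eqref{etadot}--\eqref{What} gives $\eta_i\to v$ and $\hat{E}_i\to E$; writing $\tilde{\eta}_i=v-\eta_i$, the estimation error over the finite learning window $[t_0,t_s]$ can be bounded by a quantity $\varrho(\kappa)$ that a suitable choice of the gains $\kappa_r$ drives to $0$. Since each entry of $\Psi_{ijk}$ and $\Phi_{ijk}$ is a finite-horizon integral that depends on $v$ only through the substitution $v\mapsto\eta_i$ (equivalently, $\bar{x}_{ij}=x_i-X_{ij}v$ replaced by $x_i-X_{ij}\eta_i$), and the integrands are bounded on $[t_0,t_s]$, the perturbed data differ from the ideal data by $O(\varrho(\kappa))$. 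Because the rank condition \eqref{rank_cndt} is an open property, $\Psi_{ijk}$ retains full column rank under a sufficiently small perturbation, so the pseudoinverse map that recovers $(P_i^{(k)},K_i^{(k+1)})$ from the data is locally Lipschitz, and each learned iterate deviates from its ideal counterpart by $O(\varrho(\kappa))$.

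The delicate point is the coupling of the recursion: $P_i^{(k)}$ enters the construction of $\Psi_{ij,k+1}$ through $K_i^{(k)}$, so the observer-induced perturbations compound across iterations. I would control this by induction on $k=0,1,\dots,k^\star$: since $k^\star$ was already fixed in the first step \emph{independently} of $\kappa$, only a finite product of the per-step Lipschitz constants appears, yielding a finite, $\kappa$-independent constant $\bar{L}$ with $|P_i^{(k^\star)}-\bar{P}_{i,k^\star}|\le\bar{L}\,\varrho(\kappa)$ and the analogous bound for $K$. I would then select $\kappa_r$ large enough that $\bar{L}\,\varrho(\kappa)<c/2$; the triangle inequality gives $|P_i^{(k^\star)}-P_i^\star|\le|P_i^{(k^\star)}-\bar{P}_{i,k^\star}|+|\bar{P}_{i,k^\star}-P_i^\star|<c$, and similarly for $K_i^{(k^\star)}$, which is exactly the claim.

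The main obstacle is this last propagation estimate: showing that the data perturbation caused by replacing $v$ with the estimate $\eta_i$ remains below tolerance after being pushed through the coupled, nonlinear data-driven recursion. The essential saving feature is the ordering of the quantifiers — fixing the finite horizon $k^\star$ before the observer gains guarantees that the compounded Lipschitz constant $\bar{L}$ is finite and does not depend on $\kappa$, so it can be dominated by shrinking $\varrho(\kappa)$. One must additionally verify that full column rank of $\Psi_{ijk}$, and hence solvability of \eqref{adp_eq}, persists under the perturbation, which follows from rank being an open condition together with the uniform boundedness of the integrands on the finite window $[t_0,t_s]$.
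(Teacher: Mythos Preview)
Your proposal is correct and follows essentially the same route as the paper: split the error via the triangle inequality into an ideal-PI contribution $|\bar{P}_{i,k^\star}-P_i^\star|$ (handled by the Kleinman/PI convergence already established) and an observer-induced perturbation $|P_i^{(k^\star)}-\bar{P}_{i,k^\star}|$ (controlled via Lemma~\ref{lemma: observer} by choosing $\kappa_r$). Your treatment is in fact more careful than the paper's brief argument—you make explicit the ordering of the quantifiers (fix $k^\star$ first, then tune $\kappa_r$), the finite-horizon Lipschitz propagation of the data perturbation through the coupled recursion, and the openness of the rank condition—points the paper leaves implicit.
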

\begin{proof}
The condition \eqref{rank_cndt} ensures that \eqref{adp_eq} has a unique solution. As the convergence of steps 6-12 has been shown in \cite{Gao2017ACC}, we can always find a small constant $c_1>0$ such that the pairs $(\bar{P}^{k^\star }_i,\bar{K}^{k^\star }_i)$ obtained from steps 6-12 are close enough to $(P_{i}^\star, K_{i}^\star)$ solved from \eqref{eq: Policy evaluation}-\eqref{eq: Policy improvement} satisfying the inequalities $|\bar{P}_{i}^{(k^\star )}-P_{i}^\star |<c_1$ and $|\bar{K}_{i}^{(k^\star )}-K_{i}^\star |<c_1$, for all ${i\in\mathcal{F}}$. Moreover, from Lemma \ref{lemma: observer} and Theorem \ref{stability theorem}, it is always guaranteed that there exists a constant $c_2>0$ such that $|\bar{P}_{i,k}-P_{i,k}|<c_2$ and $|\bar{K}_{i,k}-K_{i}|<c_2$, for every $k\in\mathbb{Z}_{+}$ and ${i\in\mathcal{F}}$, since $(\hat{w}_r)_{i}$ defined in \eqref{What} is uniformly bounded \cite[Theorem 1]{Baldi2020}. 
Using the triangular inequality, we can find an iteration index $k^\star $ and a small constant $c>0$ such that the inequalities $|P_{i}^{(k^\star )}-P_{i}^\star |<c$ and $|K_{i}^{(k^\star )}-K_{i}^\star |<c$ are satisfied. The proof is thus completed.
\end{proof}

\begin{remark}
It is worth mentioning that steps 10-16 in Algorithm \ref{algorithm: ADP} are PI-based. A value iteration (VI) based method, similar to the one developed in our previous work in \cite{gao2021TNN}, can be used to replace these steps. In the VI-based framework, an initial stabilizing policy is not required. However, its convergence rate is slower than the quadratic convergence rate of PI used in Algorithm \ref{algorithm: ADP}. 
\end{remark}
\begin{remark}
The proposed Algorithm \ref{algorithm: ADP} is an off-policy learning method. Each follower learns its own optimal policy independently, which makes it more practical, especially for large scale systems. 
\end{remark}
\section{Simulation and Results}\label{sec: Simulation}
In this section, we illustrate the efficacy of the proposed Algorithm \ref{algorithm: ADP} in an example, in which the system consists of four followers and a leader as depicted in Fig. \ref{fig:system}. The exosystem ($\#0$) is a harmonic oscillator described by the matrix $E$ and the followers ($\#1-4$) are described by the matrices $A_i,B_i,C_i,D_i,$ and $F_i$. 
\begin{figure}[tb!]
    \centering
    \includegraphics[width=0.71\linewidth]{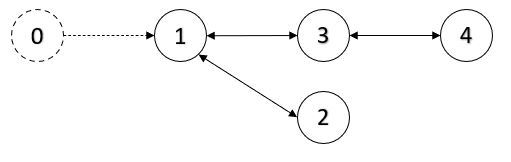}
    \vspace{-1.5mm}
    \caption{The communication topology of the overall system.}
    \label{fig:system}
     \vspace{-3mm}
\end{figure}
In this example we assume that there is no prior knowledge of the dynamics of the system ($A_i, B_i,$ and $D_i$), or the exosystem dynamics ($E$). The system matrices for the system described in \eqref{exosys}-\eqref{subsys2} are shown below for simulation purposes.
\begin{align*}
    A_i&=\begin{bmatrix}
1&1+i&0\\0&2&-0.5i\\1&0&1+i
\end{bmatrix},\;B_i=\begin{bmatrix}
0\\1\\i
\end{bmatrix},\;C_i=\begin{bmatrix}
1/i&0&0
\end{bmatrix},\\D_i&=\begin{bmatrix}
1&0&-1&0\\0&0&1.5i&0\\0&1&0&-0.5i
\end{bmatrix},\;
F_i=\begin{bmatrix}
-0.75i&0&1&0
\end{bmatrix},
\end{align*}
and $E=\textrm{bdiag}\left(\begin{bmatrix}
0&1\\
-1&0\end{bmatrix},
\begin{bmatrix}
0&0.75\\
-0.75&0
\end{bmatrix}\right).$

The weighting matrices of the cost function are $Q=I_{3}$ and $R=1$, the initial values are $v(0)=\begin{bmatrix}0&1&0&0.5\end{bmatrix}^\textrm{T}$, $(\hat{w}_r)_i(0)=0$, and the rest of the parameters are $\varepsilon_i=10^{-4}$,  ${\kappa_r=\begin{bmatrix}40&40\end{bmatrix}}$ and $ a_r=\begin{bmatrix}15&15\end{bmatrix}$, $\forall i=1,2,3,4$.
During the time interval ${0 \leq t \leq 8 s}$, an essentially bounded exploration noise $\zeta_i$ is added to the applied initial control policy. Using Algorithm \ref{algorithm: ADP}, first $\hat{E}_i$ is estimated, then the approximations of the optimal feedback and feedforward control gain matrices $K_{i}^{\star}$ and $L_{i}^{\star}$ are calculated, respectively. Fig. \ref{fig:convergence} depicts that $P_{i,k}$ obtained by Algorithm \ref{algorithm: ADP} converge to their optimal values $P_{i}^\star $ obtained by solving directly from \eqref{eq: ARE}, and the convergence is achieved in less than or equal to 19 iterations. The optimal solution to the regulator equations obtained is 
used to calculate the feedforward gains, which are shown with their corresponding actual ones for the sake of comparison.
\begin{align*}
    L_{1}^{(14)}&=\begin{bmatrix}2.8801&-11.9485&16.4917&12.4644\end{bmatrix},\\
     L_{1}^{\star}&=\begin{bmatrix}2.8801&-11.9484&16.4918&12.4641\end{bmatrix},\\
    L_{2}^{(16)}&=\begin{bmatrix}1.0720&-6.2090&15.1043&7.4341\end{bmatrix},\\
    L_{2}^{\star}&=\begin{bmatrix}1.0721&-6.2089&15.1043&7.4340\end{bmatrix},\\
    L_{3}^{(17)}&=\begin{bmatrix}-3.1127&-7.3517&13.5064&5.2960\end{bmatrix},\\
    L_{3}^{\star}&=\begin{bmatrix}-3.1117&-7.3508&13.5063&5.2923\end{bmatrix},\\
    L_{4}^{(19)}&=\begin{bmatrix}-8.5758&-9.4777&13.3007&4.4879\end{bmatrix},\\
    L_{4}^{\star}&=\begin{bmatrix}-8.5725&-9.4729&13.3089&4.4654\end{bmatrix}.
\end{align*}
From Fig. \ref{fig:convergence} and the above-mentioned feedforward gain matrices, it can be noticed that the approximated control policy converges to the optimal policy, while neither the system dynamics nor that of the exosystem are known. Moreover, Fig. 
\ref{fig:v_and_trackingerror} demonstrates the convergence of the tracking error and the estimation error.
Finally, one can observe from Fig. \ref{fig:output} that all the followers can achieve asymptotic tracking while rejecting nonvanishing disturbance.
\section{Conclusion}\label{sec: Conclusion}
This paper studies the cooperative output regulation problem of a class of continuous-time linear multi-agent systems with unknown system dynamics. A distributed control policy is derived by first estimating the exosystem dynamics for each follower, then adaptive dynamic programming (ADP) is used to approximate the optimal solution to the regulator equations. The effectiveness of the proposed algorithm and the ability to achieve asymptotic tracking while rejecting nonvanishing disturbances are demonstrated by both theoretical analysis and performed simulation. Future work includes extending this work to a class of nonlinear systems with robust analysis subject to external disturbances.

\begin{figure}[tb!]
    \centering
    \includegraphics[width=0.85\linewidth,trim={1.10cm 1mm 1.10cm 4.50mm},clip]{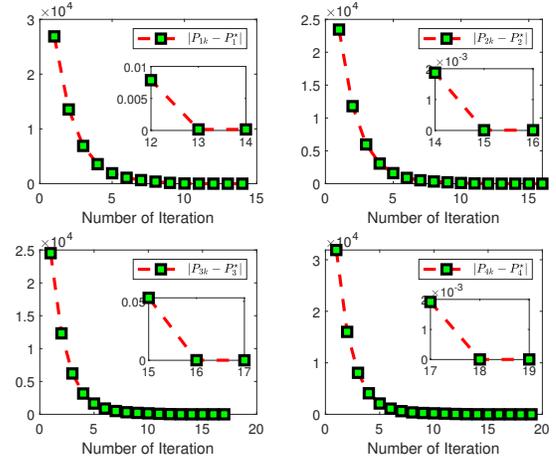}
    \caption{$|P_{i,k}-P_{i}^\star|$ for all $i=1,2,3,4$ under Algorithm \ref{algorithm: ADP}.}
    \label{fig:convergence}
\end{figure}
\begin{figure}[tb!]
    \centering
    \includegraphics[width=0.85\linewidth,trim={0.85cm 1mm 1.10cm 4.50mm},clip]{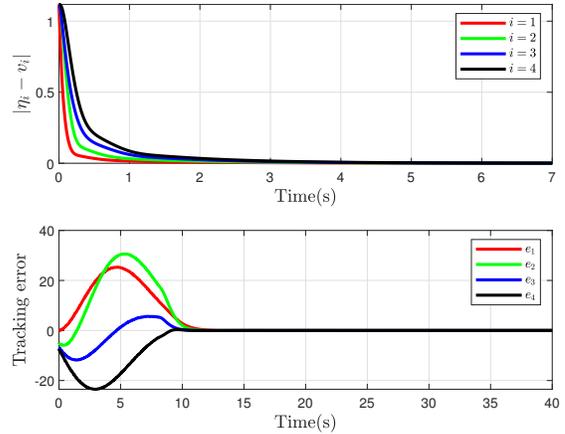}
    \caption{Exogenous signal estimation error and the tracking error under Algorithm \ref{algorithm: ADP}.}
    \label{fig:v_and_trackingerror}
\end{figure}
\begin{figure}[tb!]
    \centering
    \includegraphics[width=0.85\linewidth,trim={1.10cm 1mm 1.10cm 4.50mm},clip]{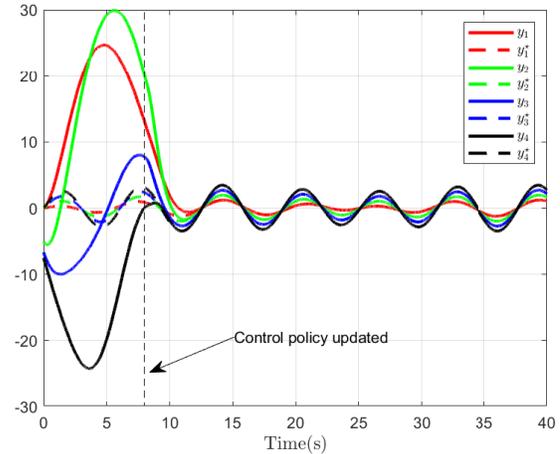}
    \caption{Actual outputs $(y_i)$ and the desired outputs $(y_{i}^*)$ generated under Algorithm \ref{algorithm: ADP}.}
    \label{fig:output}
\end{figure}

\bibliographystyle{IEEEtran}
\bibliography{ref}
\end{document}